\newtheorem{thm}{Theorem}
\newtheorem{lemma}{Lemma}%[chapter]
\newcommand{\perm}{\mathop{\textup{Per}}}
\newcommand{\EX}{\mathbb{E}}
\newcommand{\bigo}[1]{\mathcal{O}(#1)}
\newcommand{\multichoose}[2]{(^{#1+#2-1}_{\hphantom{#1+}#2})}
\newcommand{\shortbinom}[2]{(^{#1}_{#2})}
\newcommand{\newtilde}[1]{\mathbf{#1}}
\newcommand*\Let[2]{\State #1 $\gets$ #2}
\newcommand{\Bkl}{B^{\diamond}_{k,\ell}}
\newcommand{\Bk}{B^{\diamond}_{k}}
\def\BS/{Boson Sampling}
\title{The Classical Complexity of \BS/}
\author{Peter Clifford\\
 {\normalsize Department of Statistics}\\
  {\normalsize University of Oxford}\\
  {\normalsize United Kingdom}\\
  \and
  Rapha\"el Clifford\\
  {\normalsize Department of Computer Science}\\
  {\normalsize University of Bristol}\\
  {\normalsize  United Kingdom}\\
  }
\date{\vspace{-1ex}}  
\begin{document}
\maketitle
\thispagestyle{empty}

\begin{abstract}
We study the classical complexity of the exact \BS/ problem where the objective is to produce provably correct random samples from a particular quantum mechanical distribution.  The computational framework was proposed in STOC '11 by \citeauthor{AA:2011} in 2011 as an attainable demonstration of `quantum supremacy', that is a practical quantum computing experiment able to produce output at a speed beyond the reach of classical (that is non-quantum) computer hardware.  Since its introduction \BS/ has been the subject of intense international research in the world of quantum computing.  On the face of it, the problem is challenging for classical computation.  \citeauthor{AA:2011} show that exact \BS/ is not efficiently solvable by a classical computer unless $\P^{\#\P}  = \BPP^{\NP}$ and the polynomial hierarchy collapses to the third level. 

The fastest known exact classical algorithm for the standard \BS/ problem  requires $\bigo{\multichoose{m}{n}\, n 2^n }$ time to produce samples for a system with input size $n$ and $m$ output modes, making it infeasible for anything but the smallest values of $n$ and $m$.  We give an algorithm that is much faster, running in $\bigo{n 2^n + \poly(m,n)}$ time and $\bigo{m}$ additional space. The algorithm is simple to implement and has low constant factor overheads. As a consequence our classical algorithm is able to solve the exact \BS/ problem for system sizes far beyond current photonic quantum computing experimentation, thereby significantly reducing the likelihood of achieving near-term quantum supremacy in the
context of \BS/. 

\end{abstract}

%\newpage
%\setcounter{page}{1}
\section{Introduction}

The promise of significantly faster quantum algorithms for problems of practical interest is one of the most exciting prospects in computer science. Most famously, a quantum computer would allow us to factorise integers in polynomial time \citep{Shor:1997} and perform unstructured search on an input of $n$ elements in $\bigo{\sqrt{n}}$ time \citep{Grover:1996}. In the short term however, despite impressive progress in recent years, there are still considerable challenges to overcome before we can build a fully universal quantum computer.   Until this time, the question of how to design quantum experiments which fall short of fully universal computation but which still show  significant quantum speed-up over known classical (that is non-quantum) algorithms remains of great interest. In a breakthrough contribution in STOC '11 \citet{AA:2011} proposed an experimental set-up in linear optics known as \BS/ as a potentially practical way to do just this.    

Since its introduction, \BS/ has attracted a great deal of attention with numerous experimental efforts around the world attempting implementations for various problem sizes~\citep[see e.g.\@][]{bentivegna2015experimental,spring2013boson,broome2013photonic,tillmann2013experimental,crespi2013integrated,spagnolo2014experimental,LSS:2016}. The ultimate goal is to exhibit a physical quantum experiment of such a scale that it would be hard if not impossible to simulate the output classically and thereby to establish so-called `quantum supremacy'.    In terms of the physical \BS/ experiment, $n$ corresponds to the number of photons and $m$ the number of output modes and increasing either of these is difficult in practice. Progress has therefore been slow \citep[see][and the references therein]{LBR:2017} with the current experimental record being $n = 5,  m = 9$ \citep[]{Wang2016}.  

Translated into conventional computing terms, the task is to generate independent random samples from a specific probability distribution on multisets of size $n$ with elements in the range $1$ through $m$.  For \BS/, the  probability of each multiset is related to the permanent of an $n \times n$ matrix built from possibly repeated rows of a larger $m \times n$ matrix where the multiset determines the rows used in the construction. The  \textit{\BS/ problem} is to produce such samples either with a quantum photonic device or with classical computing hardware. 

On the face of it, \BS/ is a challenging problem for classical computation.  \citet{AA:2011} showed that exact \BS/ is not solvable in polynomial time by a classical computer unless $\P^{\#\P}  = \BPP^{\NP}$ and the polynomial hierarchy collapses to the third level. 

Without a clear understanding of the classical complexity of the problem it is difficult to determine the minimum experimental size needed to establish quantum supremacy. In the absence of a fast classical algorithm for \BS/, speculation about the minimum size of $n$ (with $m=n^2$) has varied a great deal in the literature from e.g.\@ $7$ \citep[]{LSS:2016}, between $20$ and $30$ \citep[]{AA:2014} and all the way up to $50$ \citep[]{LBR:2017}.   Broadly speaking the lowest estimate corresponds to the limits of the previous fastest classical \BS/ algorithm and the highest with an assumption that the polynomial hierarchy will not collapse.  

Brute force evaluation of the probabilities of each multiset as a preliminary to random sampling, requires the calculation of $\multichoose{m}{n}$  permanents of $n \times n$ matrices, each one of which takes $\bigo{n 2^n}$ time with the fastest known algorithms.  If $m \geqslant n^2$, as considered by \citet{AA:2011}, the total running time is $\Theta(2^n \mathrm{e}^n (m/n)^n  n^{1/2})$. The space usage of such a naive approach can be reduced to the  storage needed for the required sample size  \citep[see e.g.\@][]{Efraimidis:2015}, however the running time for anything but the smallest values of $n$ and $m$ remains prohibitive. Previous to the work we present here, no faster provably correct classical \BS/ algorithm was known. 

Recently \citet{NSCJML:2017}  gave strong numerical evidence suggesting that a specially designed Markov chain Monte Carlo (MCMC) implementation can sample approximately from the collision-free version of \BS/.    The time cost per sample for their MCMC method for problems with size $n \leqslant 30$ corresponds to that of computing around two hundred $n \times n$ permanents, making it computationally feasible for the problem sizes they consider.  This was the first practical evidence that classical \BS/ might be possible faster than the naive brute force approach and is a significant motivation for the work we present here.  
%Furthermore, in his recent blog \citet{Aaronson:blog} reports that he has long suspected the classical complexity of approximate sampling would be dominated by a factor of $2^n$. Possible contributions from terms involving $m$ are not discussed, however.  We provide substance to Aaronson's insight and, for the more challenging problem of exact sampling, we show precisely how such terms interact in Theorem \ref{thm2n}. 

We show that \BS/ on a classical computer can be performed both \emph{exactly} and dramatically faster than the naive algorithm, significantly raising the bar for the minimum size of quantum experiment needed to establish quantum supremacy.   Our sampling algorithm also provides the probability associated with any given sample at no extra computational cost. 
\vspace{0.5em}
\begin{thm}\label{thm2n} The time complexity of  \BS/ is:
\[ \bigo{n 2^n +\poly(m,n)}, \]
where $\poly(m,n) = \bigo{m n^2}.$ The additional space complexity on top of that needed to store the input is $\bigo{m}$.
\end{thm}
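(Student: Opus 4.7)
The plan is to sample the multiset by first drawing an ordered tuple $(r_1, \ldots, r_n) \in [m]^n$ from the lifted distribution $P(r) = |\perm(A[r])|^2 / n!$ (where $A[r]$ is the $n \times n$ submatrix formed from the column-orthonormal input matrix $A$ by taking rows $r_1, \ldots, r_n$) and then collapsing to an unordered multiset. To produce $(r_1, \ldots, r_n)$ efficiently, I would introduce an auxiliary uniform random permutation $\sigma \in S_n$ and draw the tuple sequentially by the chain rule: at step $k$, having fixed $r_1, \ldots, r_{k-1}$ and $\sigma(1), \ldots, \sigma(k)$, sample $r_k \in [m]$ with probability proportional to $N_k := |\perm(A[r_{\leq k}\,|\,\sigma_{\leq k}])|^2$, the modulus-squared permanent of the $k \times k$ submatrix on the current $k$ rows and the first $k$ revealed columns.

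To establish correctness, I would first compute the per-step normaliser $Z_k = \sum_{r_k} N_k$. Laplace expansion along the $k$-th row combined with column orthonormality of $A$ yields $Z_k = \sum_{j \in \sigma_{\leq k}} |\perm(A[r_{<k}\,|\,\sigma_{\leq k}\setminus\{j\}])|^2$. Using this identity I would show by induction on $k$ that summing the chain-rule product $\prod_{\ell \leq k} N_\ell(S_\ell)/Z_\ell(S_\ell)$ over all nested chains $S_1 \subset S_2 \subset \cdots \subset S_k$ terminating at a fixed $S_k$ collapses to the single term $N_k(S_k)$. Summing over the top element $S_{n-1}$ then returns $Z_n$ by the Laplace identity above; pushing through the final factor $N_n/Z_n$ and exploiting that $|\perm|^2$ is column-permutation invariant, I would conclude that marginalising $\sigma$ out of the augmented joint recovers $P(r)$ exactly.

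For efficiency, the key observation is that at step $k$ the permanent is linear in the last row: by Laplace expansion $\perm(A[r_{\leq k}\,|\,\sigma_{\leq k}]) = \sum_{j \in \sigma_{\leq k}} A_{r_k, j}\,M_j$ with cofactors $M_j = \perm(A[r_{<k}\,|\,\sigma_{\leq k}\setminus\{j\}])$ that are independent of $r_k$. All $k$ cofactors can be produced jointly in $\bigo{k\,2^k}$ time by iterating subsets $T \subseteq \sigma_{\leq k}$ in Gray-code order, maintaining the $k-1$ Ryser row sums $s_i = \sum_{l \in T} A_{r_i, l}$ (each updated in $\bigo{1}$ per Gray-code transition), and letting each $T$ contribute $(-1)^{|T|} \prod_{i<k} s_i$ to $M_j$ for every $j \notin T$. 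Once the $M_j$ are in hand, $|\perm|^2$ can be evaluated at all $m$ candidate values of $r_k$ in $\bigo{m k}$ time using the displayed linear expression. Summing the per-step cost $\bigo{k\,2^k + m k}$ across $k = 1, \ldots, n$ yields total time $\bigo{n\,2^n + m n^2}$, and since each step uses only a length-$m$ probability table together with $\bigo{k}$ cofactors and Ryser state, the additional space is $\bigo{m}$.

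The principal obstacle is the correctness argument: verifying that integrating the auxiliary permutation $\sigma$ out of the chain-rule joint indeed collapses the product of conditional squared permanents back to the target density, and in particular checking that the inductive identity $f_k(S) = N_k(S)$ really does use only the Laplace expansion of $Z_k$ and nothing deeper. The complexity side then reduces to careful Gray-code bookkeeping; one subtlety worth verifying is that distributing each subset $T$'s contribution to the cofactors $M_j$ for $j \notin T$ is $\bigo{k - |T|}$ per $T$ and thus $\bigo{k\,2^{k}}$ in total, so the per-step cofactor assembly stays at $\bigo{k\,2^k}$ rather than anything involving $m$.
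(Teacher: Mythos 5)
Your proposal is correct and follows essentially the same route as the paper: lift to the ordered sample space $[m]^n$, condition on a uniform auxiliary column permutation, sample $r_k$ sequentially with probability proportional to the squared permanent of the growing $k\times k$ submatrix, and amortise the $m$ candidate evaluations at each stage via a Laplace expansion whose $k$ cofactors are computed jointly in $\bigo{k2^k}$ time. The only differences are presentational — you verify the marginalisation of $\sigma$ by an induction over nested column subsets where the paper telescopes conditional expectations, and you assemble the cofactors with a Ryser-style subset sum where the paper uses Glynn's formula with forward/backward partial products — and both variants deliver the same $\bigo{n2^n+mn^2}$ time and $\bigo{m}$ space bounds.
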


A particularly attractive property of the running time we give is that the exponentially growing term depends only on $n$, the size of the multisets we wish to sample, and not $m$ which is potentially much larger.   In the most natural parameter regime when $m$ is polynomial in $n$ our algorithm is therefore approximately $\multichoose{m}{n}$ times faster than the fastest previous method. Our algorithm is both straightforward to implement and has small constant factor overheads, with the total computational cost to take one sample  approximately equivalent to that of calculating two $n \times n$ permanents.    This gives a remarkable resolution to the previously open question of the relationship between the classical complexity of computing the  permanent of an $n \times n$ matrix and \BS/. For an implementation of the algorithm, see \citet{Boson:Rpackage}.

Based on the time complexity we demonstrate in Theorem \ref{thm2n} and bearing in mind recent feasibility studies for the calculation of large permanents \citep{wu2016computing} we find support for $n = 50$ as the threshold for quantum supremacy for the exact \BS/ problem.

Our sampling algorithm also applies directly to the closely related problem of scattershot \BS/ \citep[]{bentivegna2015experimental}. From a mathematical perspective this produces no additional complications beyond the specification of a different $m \times n$ matrix. Once the matrix is specified we can apply our new sampling algorithm directly. 

Our techniques draw heavily on the highly developed methods of hierarchical sampling commonly used in Bayesian computation~\citep[see e.g.\@][]{liu2008monte,GLPR:2015}; specially the use of auxiliary variables and conditional simulation. We believe this cross-fertilisation of ideas is novel and opens up the development of classical algorithms for sampling problems in quantum computation. It raises the intriguing  question of whether similar techniques can be applied to get significant improvements in time and space for other proposed quantum supremacy projects such as, for example, the problem of classically simulating quantum circuits \citep{BISBDJMH:2016}.

The algorithmic speed-up we achieve relies on a number of key innovations. As a first step we expand the sample space of multisets into a product set $[m]^n$,  i.e.\ the space of all arrays $\newtilde{r} = (r_1,\dots,r_n)$, with each element in the range $1$ to $m$. This simple transformation allows us to expose combinatorial properties of the sampling problem which were otherwise opaque.  We then develop explicit expressions for the probabilities of subsequences of arrays in this set. Application of the chain rule for conditional probabilities then leads to our first speed-up with an $\bigo{mn3^n}$ time algorithm for \BS/. For our main result in Theorem \ref{thm2n}, we introduce two further innovations. First we exploit the Laplace expansion of a permanent to give an efficient amortised algorithm for the computation of the permanent of a number of closely related matrices. We then expand the sample space further with an auxiliary array representing column indices and show how a conditioning argument allows \BS/ to be viewed as sampling from a succession of coupled conditional distributions. The final result is our improved time complexity of $\bigo{n2^n  + mn^2}$. 

A consequence of our analysis is that we are also able to provide an efficiently computable bound on the probability that a sampled multiset drawn from the \BS/ distribution will contain duplicated elements.  This enables us to bound the performance of rejection sampling variants of our sampling algorithm when applied to the  so-called `collision-free' setting described by \citeauthor{AA:2011}.

One further application of our new sampling algorithm lies in tackling the question of whether the output of physical \BS/ devices correspond to theoretical predictions.    There has been considerable interest in this area \citep{AA:2014,tichy2014stringent,wang2016certification,walschaers2016statistical} however previously developed methods have had to work under the assumption that we cannot sample from the \BS/ distribution for even moderate sizes of $n$ and $m$.  Armed with our new sampling algorithm, we can now sample directly from the \BS/ distribution for a much larger range of parameters, allowing a far greater range of statistical tests to be applied.

\subsection*{Related work and computation of the permanent}

\citet{TD:2004} were the first to recognise that studying the complexity of sampling from low-depth quantum circuits could be useful for demonstrating a separation between quantum and classical computation. Since that time the goal of finding complexity separations for quantum sampling problems has received a great deal of interest. We refer the interested reader to \citet{LBR:2017} and \citet{HM:2017} for recent surveys on the topic.

There is an extensive literature on the problems of sampling from unwieldy distributions with applications in all the main scientific disciplines. Much of this work has relied on MCMC methods where the target distribution can be identified as the equilibrium distribution of a Markov chain \citep[see e.g.\@][]{hastings1970monte}.  For \BS/ the most closely related work to ours is that of \citet{NSCJML:2017} mentioned earlier.  In their paper the authors also set out the hurdles that need to be overcome for a physical \BS/ experiment to reach higher values of $n$.  Their MCMC approach does not however permit them to estimate how well their sampling method approximates the true \BS/ distribution, other than through numerical experimentation. Our new exact sampling algorithm is therefore the first provably correct algorithmic speed-up for \BS/. In fact our algorithm also has smaller constant factor overheads, costing approximately two permanent calculations of $n \times n$ matrices for all values of $n$ (assuming $m$ remains polynomial in $n$). 

There are MCMC algorithms which permit exact sampling,  for example via `coupling from the past', where guarantees of performance can be provided \citep[]{propp-wilson:exact-sampling}. However, the range of problems for which this approach has proven applicable is necessarily more limited and there has yet to be a successful demonstration for the \BS/ problem.  
 
\paragraph{Permanents:} Computation of the permanent of a matrix was shown to be  \#\P-hard by \citet{Valiant:1979} and hence is unlikely to have a polynomial time solution.  Previously \citet{Ryser:1963}  had proposed an $O(n^2 2^n)$ time algorithm to compute the permanent of an $n \times n$ matrix.  This was sped up to $O(n 2^n)$ time by~\citet{NW:1978} and many years later a related formula with the same time complexity was given by~\citet{Glynn:2010}.  It is  Glynn's computational formula for the permanent which we will use as the basis of our sampling algorithms. 

Let $M = (m_{i,j})$ be an $n \times n$ matrix with $m_{i,j} \in \mathbb{C}$, and let $\perm M$ denote its permanent, then 
\begin{equation}\label{eq:glynn}
\perm M = \frac{1}{2^{n-1}} \sum_{\delta} \left( \prod_{k = 1}^n \delta_k \right) \prod_{j=1}^n \sum_{i=1}^n \delta_i m_{i,j} 
\end{equation}

where $\delta  \in \{-1,1\}^{n}$ with $\delta_1 =1$.

A naive implementation of this formula would require $\bigo{n^2 2^n}$ time however by considering the $\delta$ arrays in Gray code order this is reduced to $O(n 2^n)$ time. Note that further reduction in the time complexity may be possible when $M$ has repeated rows or columns; see \citet[][Appendix B]{Tichy:2011} and \citet[][Appendix D]{shchesnovich2013asymptotic}.  In the special case of computation of the permanent over $\poly(n)$ bit integers, \citet{bjorklund2016below} has a faster $\bigo{n2^{n-\Omega(\sqrt{n/\log{n}})}}$ time algorithm for computing the permanent.

\section{Problem specification}\label{sec:specification}
In  mathematical terms the \BS/ problem is as follows.

Let $m$ and $n$ be positive integers and consider all multisets of size $n$ with elements in $[m]$, where $[m] = \{1,\dots,m\}$.  Each multiset can be represented by an array  $\newtilde{z} = (z_1,\dots,z_n)$ consisting of elements of the multiset in non-decreasing order: $z_1 \leqslant z_2 \leqslant \dots \leqslant z_n$. 
We denote the set of distinct values of $\newtilde{z}$ by $\Phi_{m,n}$. The cardinality of $\Phi_{m,n}$ is known to be $\multichoose{m}{n}$ -- see \citet{Feller:1968}, for example. We define $\mu(\newtilde{z}) = \prod_{j=1}^m s_j!$ where $s_j$ is the multiplicity of the value $j$ in $\newtilde{z}$.

Now let $A = (a_{ij})$ be a complex valued $m \times n$ matrix consisting of the first $n$ columns of a given $m \times m$ Haar random unitary matrix. For each $\newtilde{z}$, build an $n \times n$ matrix $A_{\newtilde{z}}$ where the $k$-th row  of $A_{\newtilde{z}}$ is row $z_k$ in $A$ for $k=1,\dots,n$ and define a probability mass function (pmf) on $\Phi_{m,n}$ as

\begin{equation}\label{eq:Bz_s}
q(\newtilde{z}) 
= \frac{1}{\mu(\newtilde{z})} \left| \perm A_{\newtilde{z}}\right|^2 
\overset{\text{defn}}{=} \frac{1}{\mu(\newtilde{z})} \left| \sum_{\sigma} \prod_{k=1}^n a_{z_k \sigma_k}\right|^2, \quad \newtilde{z} \in \Phi_{m,n},
\end{equation}
where $\perm A_{\newtilde{z}}$ is the permanent of $A_{\newtilde{z}}$ and in the definition the summation is for all  $\sigma \in \pi[n]$,  the set of permutations of $[n]$.  

The Cauchy-Binet formula for permanents \citep[see][page 579]{marcus1965permanents} can be used to show the function $q$ is indeed a pmf.  There is a demonstration from first principles in the proof of Lemma \ref{th:subsequence}. See also \citet[Theorem 3.10]{AA:2011} for a demonstration from physical principles. 

The computing task is to simulate random samples from the pmf $q(\newtilde{z})$.
\section{Exact \BS/ Algorithms}

We give two algorithms for \BS/.  Algorithm A runs in $\bigo{mn3^n}$ time to create a single sample. This is already a significant speed-up over the  $\Theta{\left(\multichoose{m}{n} n 2^n\right)}$ time complexity of the fastest solution in the literature. Algorithm A prepares the way for Algorithm B our main result with running time specified in Theorem \ref{thm2n}.  
 
We will repeatedly need to sample from an unnormalised discrete probability mass function (weight function).  To establish the complexity of our \BS/ algorithms we need only employ the most naive linear time method  although faster methods do exist. In particular, if the probability mass function is described by an array of length $m$ then it is possible to sample in $\bigo{1}$ time after $\bigo{m}$ preprocessing time, using Walker's alias method~\citep{Walker:1974,KP:1979}. 
 
\paragraph{Expanding the sample space:} We start by expanding the sample space $\Phi_{m,n}$ and consider a distribution on the product space $[m]^n$, i.e.\ the space of all arrays $\newtilde{r} = (r_1,\dots,r_n)$, with each element in the range $1$ to $m$.  Since the permanent of a matrix is invariant to row reordering, we have $\perm A_{\newtilde{r}} = \perm A_{\newtilde{z}}$ when $\newtilde{r}$ is any permutation of $\newtilde{z}$.  Furthermore the number of distinguishable rearrangements of $\newtilde{z}$ is $n!/\mu(\newtilde{z})$.     

We claim that in order to sample from $q(\newtilde{z})$ we can equivalently sample from the pmf
\begin{equation} \label{def:pr}
p(\newtilde{r}) 
= \frac{1}{n!} \left| \perm A_{\newtilde{r}} \right|^2 = \frac{1}{n!} \left| \sum_{\sigma} \prod_{i=1}^n a_{r_i \sigma_i}\right|^2, \quad \newtilde{r} \in [m]^n.
\end{equation}
In other words sample $\newtilde{r}$ from $p(\newtilde{r})$  and then rearrange the elements of $\newtilde{r}$ in non-decreasing order to give $\newtilde{z}$.

This follows since for any $\newtilde{z}$ there are $n!/\mu(\newtilde{z})$ equally likely values of $\newtilde{r}$ in the expanded sample space, i.e.\ $p(\newtilde{r}) = p(\newtilde{z})$ for all such values of $\newtilde{r}$. So by the addition rule for probabilities
\[
q(\newtilde{z}) = \frac{n!}{\mu(\newtilde{z})} p(\newtilde{z}) = \frac{n!}{\mu(\newtilde{z})} \frac{1}{n!} \left| \perm A_{\newtilde{z}}\right|^2 = \frac{1}{\mu(\newtilde{z})} \left| \perm A_{\newtilde{z}}\right|^2,\quad \newtilde{z} \in \Phi_{m,n},
\]
as claimed.

This straightforward reformulation will make the task of efficient \BS/ significantly simpler. 
Our approach focuses on the pmf $p(\newtilde{r})$.   

We begin by deriving the marginal pmfs of the leading subsequences
 of $(r_1,\dots,r_n)$ in the following lemma. This will then provide the pmfs of successive elements of
 $(r_1,\dots,r_n)$ conditional on previous values. In that way
 we are able to rewrite $p(r_1,\dots,r_n)$ as a product of conditional pmfs,
 using the chain rule of probability, and exploit the chain for progressively
 simulating $r_1,\dots,r_n$.

\vspace{0.5em}
\begin{lemma}\label{th:subsequence}
The joint pmf of the subsequence $(r_1,\dots,r_k)$  is given by 
\[
p(r_1,\dots,r_k) = \frac{(n-k)!}{n!}\sum_{c \in \mathcal{C}_k} \left|\perm A_{r_1,\dots,r_k}^c\right|^2,\quad k = 1, \dots, n,
\]
where $\mathcal{C}_k$ is the set of $k$-combinations taken without replacement from $[n]$ and $A_{r_1,\dots,r_k}^c$ is the matrix formed from rows $(r_1,\dots,r_k)$ of the columns $c$ of $A$.
\end{lemma}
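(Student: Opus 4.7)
The plan is to marginalize the full pmf $p(\newtilde{r}) = \frac{1}{n!}|\perm A_{\newtilde{r}}|^2$ by explicitly summing over the trailing indices $r_{k+1},\dots,r_n \in [m]$ and collapsing the double permutation sum that arises from $|{\cdot}|^2$ via the orthonormality of the columns of $A$.

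First I would expand the squared modulus using the definition of the permanent:
\[
\left|\perm A_{\newtilde{r}}\right|^2 = \sum_{\sigma}\sum_{\tau}\prod_{i=1}^n a_{r_i\sigma_i}\,\overline{a_{r_i\tau_i}},
\]
where $\sigma,\tau$ range over $\pi[n]$. Swapping the finite sums over $r_{k+1},\dots,r_n$ inside, the trailing factors separate and, because $A$ consists of the first $n$ columns of a Haar unitary, its columns are orthonormal, giving
\[
\sum_{r=1}^m a_{r\sigma_i}\,\overline{a_{r\tau_i}} = \delta_{\sigma_i,\tau_i}
\]
for each $i>k$. Thus the only surviving $(\sigma,\tau)$ pairs are those that agree on the indices $\{k+1,\dots,n\}$.

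Next I would reindex these surviving pairs by the column set $c = \{\sigma_{k+1},\dots,\sigma_n\}^{\!c}=\{\sigma_1,\dots,\sigma_k\} \in \mathcal{C}_k$, i.e.\ the image under $\sigma$ of $\{1,\dots,k\}$, noting that $\tau$ must send $\{1,\dots,k\}$ to the same $c$. For each such $c$, there are $(n-k)!$ common bijections from $\{k+1,\dots,n\}$ onto $[n]\setminus c$, and the remaining double sum over the bijective restrictions of $\sigma,\tau$ from $[k]$ onto $c$ assembles precisely into $|\perm A_{r_1,\dots,r_k}^c|^2$. Putting it together,
\[
\sum_{r_{k+1},\dots,r_n}\left|\perm A_{\newtilde{r}}\right|^2 = (n-k)!\sum_{c\in\mathcal{C}_k}\left|\perm A_{r_1,\dots,r_k}^c\right|^2,
\]
and dividing by $n!$ yields the claim. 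As a bonus, setting $k=n$ recovers $\sum_{\newtilde{r}} p(\newtilde{r}) = \frac{1}{n!}\sum_{\newtilde{r}}|\perm A_{\newtilde{r}}|^2 = 1$, verifying from first principles that $p$ (and hence $q$) is a genuine pmf.

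The main obstacle is purely bookkeeping: correctly identifying that the $\delta$-constraint forces $\sigma$ and $\tau$ to share their restriction to $\{k+1,\dots,n\}$ as a set (not pointwise only in a trivial way), and then counting the $(n-k)!$ extensions once the $k$-subset $c$ is fixed, so that the remaining degrees of freedom in $\sigma|_{[k]}$ and $\tau|_{[k]}$ reassemble into the squared permanent of the $k\times k$ submatrix $A_{r_1,\dots,r_k}^c$. Everything else is a routine application of column orthonormality of a Haar unitary and of Fubini for finite sums.
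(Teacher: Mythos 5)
Your proposal is correct and follows essentially the same route as the paper: marginalize over the trailing coordinates, expand the squared permanent as a double sum over permutation pairs $(\sigma,\tau)$, use column orthonormality of the Haar unitary to force $\sigma_i=\tau_i$ for $i>k$, and regroup the surviving pairs by the common column set $c$ with a multiplicity of $(n-k)!$, exactly as the paper does for $k=1,2$ before asserting the general case. The only quibble is your closing ``bonus'': the normalization $\sum_{\newtilde{r}}p(\newtilde{r})=1$ comes from summing over \emph{all} $n$ coordinates (the $k=0$ instance of the argument), not from setting $k=n$, which merely returns the definition of $p(\newtilde{r})$.
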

\begin{proof}
The convention will be that $c$ is a set and $(r_1,\dots,r_k)$ is an array with possibly multiple instances of its elements. The invariance of permanents under column and row permutations removes any ambiguity in the interpretation of $\perm A^c_{r_1,\dots,r_k}$.
  
For completeness we start by confirming that $p(\newtilde{r})$ is a pmf, in other words $\sum_{\newtilde{r}} p(\newtilde{r}) = 1.$
From the definition of $p(\newtilde{r})$ and multiplying by $n!$, we have 
\begin{align}
n! \sum_{\newtilde{r}} p(\newtilde{r}) 
&= \sum_{\newtilde{r}} \left|\sum_\sigma \prod_{i=1}^n a_{r_i \sigma_i} \right|^2, \sigma \in \pi[n]\nonumber \\
&= \sum_{\newtilde{r}} \left(\sum_\sigma \prod_{i=1}^n a_{r_i \sigma_i }\right)\left(\sum_\tau \prod_{i=1}^n \bar{a}_{r_i \tau_i}\right),\tau \in \pi[n] \nonumber\\
&= \sum_\sigma \sum_\tau  \left(\sum_{\newtilde{r}} \prod_{i=1}^n a_{r_i \sigma_i} \bar{a}_{r_i \tau_i} \right)\nonumber\\
&= \sum_\sigma \sum_\tau \prod_{i=1}^n \sum_{k=1}^m a_{k,\sigma_i} \bar{a}_{k \tau_i}.\nonumber 
\end{align}
The product is $0$ when $\sigma_i \neq \tau_i$ for any $i$, by the orthonormal property of $A$.  Otherwise when $\sigma = \tau$ the product is $1$ so the final expression reduces to $n!$ and $\sum_{\newtilde{r}} p(\newtilde{r}) = 1$ as expected. 

Now looking at the case $k=1$ we have similarly
\begin{align}\label{eqB}
n! p(r_1) &= \sum_{r_2\dots r_n} \left(\sum_\sigma \prod_{i=1}^n a_{r_i \sigma_i }\right)\left(\sum_\tau \prod_{i=1}^n \bar{a}_{r_i \tau_i}\right)\nonumber\\
&= \sum_\sigma \sum_\tau a_{r_1 \sigma_1} \bar{a}_{r_1 \tau_1 } \prod_{i=2}^n \sum_{k=1}^m a_{k,\sigma_i} \bar{a}_{k \tau_i}. \nonumber
\end{align}

Again for this to be non-zero, each of the terms in the product must be $1$.  This means $\sigma_i = \tau_i, i=2, \dots, n$ which implies $\sigma = \tau$.

It follows that
\[n!p(r_1) = \sum_\sigma |a_{r_1 \sigma_1}|^2 = (n-1)! \sum_{k=1}^n |a_{r_1,k}|^2,\]
as claimed,  so that 
\begin{equation}\label{eqC}p(r_1) = \frac{1}{n}\left[|a_{r_1,1}|^2 + \dots, + |a_{r_1,n}|^2\right],\; r_1 \in [m],\nonumber \end{equation}
the sum of the squared moduli of elements of row $r_1$ divided by $n$. 

In the same way,
\[ n! p(r_1,r_2) =  \sum_\sigma \sum_\tau a_{r_1 \sigma_1} \bar{a}_{r_1 \tau_1 } a_{r_2 \sigma_2} \bar{a}_{r_2 \tau_2 } \left(\prod_{i=3}^n \sum_{k=1}^m a_{k,\sigma_i} \bar{a}_{k \tau_i}\right).
\] 
Again the only non-zero case is when $\sigma_i = \tau_i, i= 3,\dots,n$. However it does not imply that $\sigma = \tau$, since we can also have $\sigma_1 = \tau_2, \sigma_2 = \tau_1$.

Consequently,
\begin{align}\label{eqD}
 n!p(r_1, r_2) &= \mathop{\sum_\sigma \sum_\tau}_{\sigma_3,\dots,\sigma_n = \tau_3,\dots,\tau_n}  a_{r_1 \sigma_1} \bar{a}_{r_1 \tau_1 } a_{r_2 \sigma_2} \bar{a}_{r_2 \tau_2 }\nonumber\\
&= (n-2)! \sum_{c \in\mathcal{C}_2} \left(\sum_{\sigma \in \pi(c)} a_{r_1 \sigma_1} \bar{a}_{r_2 \sigma_2}\right)^2\nonumber\\
& = (n-2)! \sum_{c \in\mathcal{C}_2} \left|\perm A_{r_1,r_2}^c\right|^2,\nonumber
\end{align}
where $\mathcal{C}_2$ is the set of $2$-combinations (without replacement) from $[n]$, $\pi(c)$ is the set of permutations of $c$ and $A_{r_1,r_2}^c$ is a matrix formed from rows $(r_1,r_2)$ and columns $c$  of $A$.

In the same way
\[n!p(r_1,\dots,r_k) = (n-k)! \sum_{c \in\mathcal{C}_k} \left|\perm A_{r_1,\dots,r_k}^c\right|^2,\]
so that
\begin{equation}\label{eqE}
p(r_1,\dots,r_k) = \frac{(n-k)!}{n!}\sum_{c \in\mathcal{C}_k} \left|\perm A_{r_1,\dots,r_k}^c\right|^2,\quad k = 1, \dots, n.\nonumber
\end{equation}
\end{proof}

Our first \BS/ algorithm which we term Algorithm A samples from $p(\newtilde{r})$ by expressing this as a chain of conditional pmfs,  
\[p(\newtilde{r}) = p(r_1) p(r_2|r_1) p(r_3|r_1,r_2) \dots p(r_n | r_1,r_2,\dots,r_{n-1}).\]
Sample $r_1$ from $p(r_1)$, $r_1 \in [m]$. Subsequently for stages $k = 2,\dots,n$, sample $r_k$ from the conditional pmf $p(r_k|r_1,\dots,r_{k-1})$, $r_k \in [m]$ with $(r_1,\dots,r_{k-1})$ fixed.  At completion, the array $(r_1,\dots,r_n)$ at stage $n$ will have been sampled from the pmf $p(\newtilde{r})$. Sorting $(r_1,\dots,r_n)$ in non-decreasing order, the resulting multiset $\newtilde{z}$ is sampled from the \BS/ distribution $q(\newtilde{z})$.

\begin{algorithm}
  \caption{Boson Sampler: single sample $\newtilde{z}$ from q($\newtilde{z}$) in $\bigo{mn3^n}$ time
    \label{alg:basic}}
  \begin{algorithmic}[1]
    \Require{$m$ and $n$ positive integers;  $A$ first $n$ columns of $m \times m$ Haar random unitary matrix}
    \State $\newtilde{r} \gets \varnothing$ \Comment{\sc Empty array}
    \For{$k \gets 1 \textrm{ to } n$}
    \State $w_i \gets \sum_{c \in\mathcal{C}_k} |\perm A_{(\newtilde{r},i)}^c|^2, i \in [m]$ \Comment{Make indexed weight array $w$}
    \State $x \gets \textrm{Sample}(w)$ \Comment{Sample index $x$ from $w$}
    \State $\newtilde{r} \gets (\newtilde{r}, x)$ \Comment{Append $x$ to $\newtilde{r}$}
    \EndFor
    \State $\newtilde{z} \gets \textrm{IncSort}(\newtilde{r})$ \Comment{Sort $\newtilde{r}$ in non-decreasing order}
    \State \Return{$\newtilde{z}$}
  \end{algorithmic}
\end{algorithm}
\vspace{0.5em}
\begin{thm}\label{thm3n}
Algorithm A samples from the \BS/ distribution with time complexity $\bigo{m n 3^n}$ per sample. The additional space complexity of Algorithm A on top of that needed to store the input is $\bigo{m}$ words.
\end{thm}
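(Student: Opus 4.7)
The plan is to verify correctness via Lemma \ref{th:subsequence} and the chain rule, then do a direct cost accounting stage by stage. First I would note that for any fixed prefix $(r_1,\dots,r_{k-1})$, the denominator $p(r_1,\dots,r_{k-1})$ in
\[
p(r_k = i \mid r_1,\dots,r_{k-1}) \;=\; \frac{p(r_1,\dots,r_{k-1},i)}{p(r_1,\dots,r_{k-1})}
\]
does not depend on $i$, so by Lemma \ref{th:subsequence} the conditional pmf over $i\in[m]$ is proportional to the weight $w_i = \sum_{c\in\mathcal{C}_k}|\perm A^{c}_{(\newtilde{r},i)}|^2$ computed on line 3 of Algorithm~A. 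Hence the index sampled on line 4 is distributed as $p(r_k\mid r_1,\dots,r_{k-1})$, and iterating this for $k=1,\dots,n$ produces $\newtilde{r}\sim p(\newtilde{r})$ by the chain rule. Sorting and invoking the reduction $q(\newtilde{z})=\tfrac{n!}{\mu(\newtilde{z})}p(\newtilde{z})$ established just before Lemma~\ref{th:subsequence} gives a valid sample from the \BS/ distribution.

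For the time bound I would account for stage $k$ separately. The weight array $w$ has $m$ entries, and each entry requires summing over $\mathcal{C}_k$, which has $\binom{n}{k}$ elements; for every $c\in\mathcal{C}_k$ we must evaluate the permanent of a $k\times k$ submatrix, costing $\bigo{k2^k}$ by Glynn's formula \eqref{eq:glynn} processed in Gray code order. The cost of stage $k$ is therefore $\bigo{m\binom{n}{k}k2^k}$, and the linear-time sample step on line 4 adds only $\bigo{m}$. Summing over $k$ and using the identity
\[
\sum_{k=0}^{n}\binom{n}{k}k\,2^k \;=\; 2n\cdot 3^{n-1}
\]
(obtained from differentiating $(1+x)^n$ at $x=2$) gives total running time $\bigo{mn3^n}$ as claimed.

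For the space bound I would observe that beyond storing $A$ the only persistent structures are the length-$m$ weight array $w$, the running prefix $\newtilde{r}$ of length at most $n$, and the scratch needed to evaluate a single $k\times k$ permanent, which fits in $\bigo{n}$ words when Glynn's formula is streamed rather than materialised. Since $m\geqslant n$ in all regimes of interest, the dominating term is $\bigo{m}$ additional words.

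I do not expect any serious obstacle: the nontrivial combinatorial content sits in Lemma~\ref{th:subsequence}, which has already been proved, and the rest is bookkeeping. The only delicate point is making sure the inner $k\times k$ permanents are actually computed in $\bigo{k2^k}$ amortised time as $c$ varies; this is immediate because each $c\in\mathcal{C}_k$ is handled independently and Glynn/Ryser already achieves this bound per matrix, so no cross-combination amortisation is needed here (that kind of amortisation is precisely what Algorithm~B will exploit to eliminate the extra factor).
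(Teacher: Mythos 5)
Your proposal is correct and follows essentially the same route as the paper: correctness via the chain rule with sampling proportional to the numerator $p(r_1,\dots,r_{k-1},i)$, a per-stage cost of $\bigo{m\binom{n}{k}k2^k}$, and the identity $\sum_k \binom{n}{k}k2^k = 2n\cdot 3^{n-1} = \tfrac{2}{3}n3^n$ to sum the stages. The space accounting also matches the paper's.
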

\begin{proof}
The correctness follows from the chain rule for conditional probabilities and so we now analyse the running time. From Lemma \ref{th:subsequence}, evaluation of the pmf $p(r_1)$ for $r_1 \in [m]$ involves the sum of squared moduli in each row of $A$, a total of $\bigo{mn}$ operations. Sampling from this pmf takes $\bigo{m}$ time.  More generally we can write $p(r_k|r_1,\dots,r_{k-1}) = p(r_1,\dots,r_k)/p(r_1,\dots,r_{k-1})$.  Note that $r_k$ does not appear in the denominator, so to sample $r_k$ from this conditional pmf we can equivalently sample from the pmf that is proportional to the numerator, considered purely as function of $r_k$ with $r_1,\dots,r_{k-1}$ fixed. The numerator involves the calculating of several $k \times k$ permanents.  With the best known algorithms for computing the permanent, each requires  $\bigo{k 2^k}$ operations. There are $\shortbinom{n}{k}$ terms in $\mathcal{C}_k$ for each value of $k$, making a combined operation count at stage $k$ of $\bigo{m \shortbinom{n}{k} k 2^k}$.  Using 
\[\sum_{k=1}^n k 2^k \binom{n}{k} = \frac23 n 3^n,\] the total operation count is then $\bigo{m n3^n}$. We only need to store the result of one permanent calculation at a time and also one list of $m$ probabilities at a time.  This gives a total space usage of $\bigo{m}$ excluding the space needed to store the original matrix $A$. 
\end{proof}

We now prepare to prove Theorem~\ref{thm2n}.  
\paragraph{Laplace expansion:}
We make extensive use of the Laplace expansion for permanents \citep[see][page 578]{marcus1965permanents}, namely that for any $k \times k$ matrix $B = (b_{i,j})$,
\begin{equation}\label{eq:Laplace}
\perm B = \sum_{\ell = 1}^k b_{k,\ell} \perm \Bkl, \nonumber
\end{equation}
where $\Bkl$ is the submatrix with row $k$ and column $\ell$ removed.  Note that $\Bkl$ only depends on $\Bk$ the submatrix of $B$ with the $k$-th row removed.  An important consequence is that when $B$ is modified by changing its  $k$-th row, the modified permanent can be calculated in $\bigo{k}$ steps,  provided the values $\{\perm \Bkl\}$ are available. 

We show that computation of all of the values $\{\perm \Bkl, \ell \in [k]\}$ has the same time complexity as computing $\perm B$, the permanent of a single $k \times k$ matrix.  We will later take advantage of this fast amortised algorithm in the proof of Theorem \ref{thm2n} to quickly compute a set of permanents of matrices each with one row differing from the other.   
\vspace{0.5em}
\begin{lemma}\label{lemma1}
Let $B$ be a $k \times k$ complex matrix and let $\{\Bkl\}$ be submatrices of $B$ with row $k$ and column $\ell$ removed, $\ell \in [k]$. The collection $\{\perm  \Bkl, \ell \in [k]\}$ can be evaluated jointly in $\bigo{k 2^k}$ time and $\bigo{k}$ additional space.  
\end{lemma}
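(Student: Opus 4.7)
The plan is to apply Glynn's formula simultaneously to all $k$ submatrices $\Bkl$ for $\ell \in [k]$, exploiting the fact that they share the same set of rows --- namely rows $1,\dots,k-1$ of $B$, i.e.\ the rows of $\Bk$.

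Writing Glynn's formula for each $(k-1) \times (k-1)$ matrix $\Bkl$, we have
\[
\perm \Bkl \;=\; \frac{1}{2^{k-2}} \sum_{\delta} \left(\prod_{i=1}^{k-1} \delta_i\right) \prod_{j \in [k]\setminus\{\ell\}} s_j(\delta),
\]
where $\delta$ ranges over $\{-1,+1\}^{k-1}$ with $\delta_1 = 1$ and $s_j(\delta) = \sum_{i=1}^{k-1} \delta_i b_{i,j}$. The critical observation is that the column sums $s_j(\delta)$ for $j = 1,\dots,k$ are determined by $\Bk$ alone and do not depend on $\ell$, so they can be shared across all $k$ permanent calculations.

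I would then iterate $\delta$ through a standard Gray code so that successive values differ in exactly one coordinate. Each transition flips a single sign $\delta_i$, which updates each $s_j(\delta) \mapsto s_j(\delta) \pm 2 b_{i,j}$ in $\bigo{k}$ total work and flips the global parity $\prod_i \delta_i$ in $\bigo{1}$. For the current $\delta$ I would compute in $\bigo{k}$ time the left-to-right and right-to-left prefix products of $s_1(\delta),\dots,s_k(\delta)$; from these, the required product $\prod_{j \neq \ell} s_j(\delta)$ is read off for every $\ell \in [k]$ in $\bigo{1}$ each. The contribution $\bigl(\prod_i \delta_i\bigr)\prod_{j\neq\ell} s_j(\delta)$ is added into a running accumulator for the $\ell$-th permanent.

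With $2^{k-2}$ values of $\delta$, each processed in $\bigo{k}$ amortised time, the total running time is $\bigo{k 2^k}$ as claimed. The maintained state --- the vector $s(\delta)$, two auxiliary prefix-product arrays of length $k$, the $k$ running accumulators, the current parity, and the Gray code index --- all fit in $\bigo{k}$ additional space. The only subtlety worth flagging is that naively writing $\prod_{j\neq\ell} s_j(\delta) = \bigl(\prod_j s_j(\delta)\bigr)/s_\ell(\delta)$ would fail whenever some $s_\ell(\delta) = 0$; the prefix/suffix product trick sidesteps this cleanly without any special cases, and is really the key enabler of the amortisation over computing the $k$ permanents independently.
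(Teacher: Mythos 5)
Your proposal is correct and follows essentially the same route as the paper's proof: Glynn's formula with shared column sums $v_j(\delta)$ updated in Gray code order, combined with forward and backward cumulative products to extract all $k$ leave-one-out products in $\bigo{k}$ time per $\delta$. Your explicit remark that the prefix/suffix trick avoids the division-by-zero pitfall of computing $\prod_j s_j(\delta)/s_\ell(\delta)$ is a worthwhile observation the paper leaves implicit.
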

\begin{proof}[Proof of Lemma]
By applying Glynn's formula~\eqref{eq:glynn} for a given value of $\ell$ we have:
\[
\perm  \Bkl = \frac{1}{2^{k-2}}\sum_{\delta}  \left( \prod_{i = 1}^{k-1} \delta_i \right) \prod_{j \in [k] \setminus \ell}  v_j(\delta), \quad \ell \in [k],
\]
where $\delta  \in \{-1,1\}^{k-1}$ with $\delta_1 =1$ and  $ v_j(\delta) = \sum_{i=1}^{k-1} \delta_i b_{ij}$. 

Exploiting the usual trick of working through values of $\delta$ in Gray code order, the terms $\{v_j(\delta), j \in [k]\}$ can be evaluated in $\bigo{k}$ combined time for every new $\delta$. This is because successive $\delta$ arrays will differ in one element. The product of the $v_j(\delta)$ terms can therefore also be computed in $\bigo{k}$ time giving $\bigo{k2^k}$ time to compute $\perm  \Bkl$ for a single value of $\ell$, but of course this has to be replicated $k$ times to cover all values of $\ell$. 

To compute $\{\perm  \Bkl  \ell \in [k]\}$ more efficiently we observe that each product $\prod_{j \in [k] \setminus \ell}  v_j(\delta)$ can be expressed as $f_\ell b_\ell$  where $f_\ell = \prod_{j=1}^{\ell-1}v_j(\delta), \ell = 2,\dots,k$ and $b_\ell = \prod_{j=\ell+1}^k v_j(\delta), \ell = 1,\dots,k-1$ are forward and backward cumulative products, with $f_1 = b_k = 1$.  
We can therefore compute all of the partial products $\prod_{j \in [k] \setminus \ell}  v_j(\delta)$ in $\bigo{k}$ time, giving an overall total time complexity of $\bigo{k2^k}$ for jointly computing $\{\perm  \Bkl, \ell \in [k]\}$. Other than the original matrix, space used is dominated by the array of $k$ accumulating partial products. 
\end{proof}  

%With only small modifications, the fast amortised algorithm of Lemma~\ref{lemma1} can be made to run in the stated time complexity for any matrix $A$, removing the requirement that it be Haar random completely~\citep[see e.g.\@][]{Boson:Rpackage}. 

Furthermore the computation time has constant factor overheads similar to that of computing $\perm B$ (see Appendix).

\paragraph{Auxiliary column indices:}
We now expand the sample space even further with an auxiliary array $\boldsymbol{\alpha}= (\alpha_1,\dots,\alpha_n)$, where $\boldsymbol{\alpha} \in \pi[n]$, the set of permutations of $[n]$. 

As with Lemma 1, the purpose is to create a succession of pmfs for leading subsequences of the array $(r_1,\dots,r_n)$ and thereby provide successive conditional pmfs so that  $r_1,\dots,r_n$ can be simulated progressively. The novelty here is to introduce an overall conditioning variable $\boldsymbol{\alpha}$ so that sampling can be carried out more rapidly under that condition while demonstrating that the marginal distribution still correctly provides the pmf $p(\newtilde{r})$ in \eqref{def:pr}.  

Define 
\[
\phi(r_1,\dots,r_{k}|\boldsymbol{\alpha}) =  \frac{1}{k!} \left| \perm A_{r_1,\dots,r_k}^{[n] \setminus \{\alpha_{k+1},\dots,\alpha_n\}}\right|^2,  \quad k = 1,\dots,n-1.
\]
For notational convenience let $e_k = \phi(r_1,\dots,r_{k}|\boldsymbol{\alpha})$ and $d_k = \sum_{r_k} \phi(r_1,\dots,r_{k}|\boldsymbol{\alpha})$ for $k = 1,\dots,n-1$ with $e_n  = p(r_1,\dots,r_{n})$ and $d_n = p(r_1,\dots,r_{n-1})$. Note that 
$\phi(r_1,\dots,r_{k}|\boldsymbol{\alpha})$ is a pmf on $[m]^k$, equivalent to \eqref{def:pr} using the subset  $\{\alpha_1,\dots,\alpha_k\}$ of columns of $A$, so that in particular $d_1 = 1$. 
 
\vspace{0.5em} 
\begin{lemma}\label{thm:phi}
With the preceding notation, let $\phi(\newtilde{r}|\boldsymbol{\alpha}) = \prod_{k=1}^n e_k/d_k$
then $p(\newtilde{r}) =  \EX_{\boldsymbol{\alpha}} \{\phi(\newtilde{r}|\boldsymbol{\alpha})\}$ where the expectation is taken over $\boldsymbol{\alpha}$, uniformly distributed on $\pi[n]$ for fixed $\newtilde{r}$.
\end{lemma}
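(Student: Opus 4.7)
The plan is to isolate the $\boldsymbol{\alpha}$-independent factor $e_n/d_n = p(r_1,\dots,r_n)/p(r_1,\dots,r_{n-1})$ from $\phi(\newtilde{r}|\boldsymbol{\alpha})$ and then show $\EX_{\boldsymbol{\alpha}}[\prod_{k=1}^{n-1} e_k/d_k] = p(r_1,\dots,r_{n-1})$ by a telescoping induction along a random chain of subsets of $[n]$. Two preliminary observations set the stage. First, because permanents are invariant under column permutations, both $e_k$ and $d_k$ depend on $\boldsymbol{\alpha}$ only through the unordered set $S_k := \{\alpha_1,\dots,\alpha_k\}$; I will therefore write $e_k(S)$ and $d_k(S)$, extending $e_{k-1}(T) := \tfrac{1}{(k-1)!}|\perm A^T_{r_1,\dots,r_{k-1}}|^2$ to arbitrary $(k-1)$-subsets $T\subseteq[n]$. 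Second, a uniform $\boldsymbol{\alpha}\in\pi[n]$ induces a uniformly random increasing chain $S_1\subset S_2\subset\cdots\subset S_{n-1}$ with $|S_k|=k$; equivalently, $S_{n-1}$ is a uniform $(n-1)$-subset of $[n]$, and each $S_{k-1}$ is obtained from $S_k$ by dropping a uniformly random element.

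The engine of the argument is the averaging identity
\[d_k(S_k) = \frac{1}{k}\sum_{T\subset S_k,\,|T|=k-1} e_{k-1}(T),\]
equivalently $\EX_{S_{k-1}\mid S_k}[e_{k-1}(S_{k-1})] = d_k(S_k)$ under the uniform drop-one-element kernel. I will prove it by exactly the column-orthonormality manipulation used in the proof of Lemma~\ref{th:subsequence}: expand $\sum_{r_k}|\perm A^{S_k}_{r_1,\dots,r_k}|^2$ as the double sum $\sum_{\sigma,\tau}\prod_i a_{r_i,\alpha_{\sigma_i}}\bar{a}_{r_i,\alpha_{\tau_i}}$; the inner sum over $r_k$ collapses to $\delta_{\sigma_k,\tau_k}$, and once the common value $j=\sigma_k=\tau_k$ is fixed, the remaining sums over the restricted permutations of $[k]\setminus\{j\}$ reassemble into $|\perm A^{S_k\setminus\{\alpha_j\}}_{r_1,\dots,r_{k-1}}|^2$. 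Dividing by $k!$ produces the required average.

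Equipped with this identity I prove by induction on $k$ the conditional statement
\[\EX\biggl[\prod_{j=1}^{k}\frac{e_j(S_j)}{d_j(S_j)} \biggm| S_k=T\biggr] = e_k(T),\]
valid for every $k$-subset $T\subseteq[n]$, where the conditional expectation is over a uniform chain $S_1\subset\cdots\subset S_k=T$. The base case $k=1$ is immediate since $d_1=1$ and $S_1=T$. For the inductive step I pull the $S_k$-measurable factor $e_k(T)/d_k(T)$ outside the expectation, apply the induction hypothesis given $S_{k-1}$ to obtain $e_{k-1}(S_{k-1})$, and then average $e_{k-1}(S_{k-1})$ over the uniform $(k-1)$-subset $S_{k-1}$ of $T$; by the key identity this equals $d_k(T)$, which cancels the denominator. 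Specialising to $k=n-1$ and averaging $e_{n-1}(S_{n-1})$ over the uniform $(n-1)$-subset $S_{n-1}$ of $[n]$ yields $p(r_1,\dots,r_{n-1})$ by a direct application of Lemma~\ref{th:subsequence}, and multiplying by $e_n/d_n$ gives $p(\newtilde{r})$ as required. I expect the averaging identity to be the only real technical obstacle; once it is in hand, the rest is a short telescoping induction.
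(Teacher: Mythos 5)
Your proposal is correct and follows essentially the same route as the paper: the heart of both arguments is the identity that $d_k$ equals the conditional expectation of $e_{k-1}$ over the uniformly dropped column (your averaging identity), followed by a telescoping of the product $\prod_k e_k/d_k$. The only difference is organisational --- the paper telescopes by successively integrating out $\alpha_2,\alpha_3,\dots,\alpha_n$ and invokes Lemma~\ref{th:subsequence} for the key identity, whereas you recast $\boldsymbol{\alpha}$ as a uniform chain of subsets $S_1\subset\cdots\subset S_{n-1}$, prove the averaging identity directly from column orthonormality, and run a formal induction on $k$; this is a slightly more explicit rendering of the same argument.
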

\begin{proof} For the particular case $k = n-1$, Lemma \ref{th:subsequence} shows that $p(r_1,\dots,r_{n-1})$ can be written as a mixture of component pmfs and expressed as an expectation over $\alpha_n$, i.e.\
\[
p(r_1,\dots,r_{n-1}) = \frac{1}{n} \sum_{j=1}^n \frac{1}{(n-1)!}\left|\perm A^{[n] \setminus \, j}_{r_1,\dots,r_{n-1}}\right|^2 
= \EX_{\alpha_n} \{\phi(r_1,\dots,r_{n-1}|\boldsymbol{\alpha})\},
\]
where $\alpha_n$ is uniformly distributed on $[n]$. In other words $d_n$ is the expectation of $e_{n-1}$ over $\alpha_n$. More generally when the leading subsequence is $(r_1,\dots,r_k)$ and the active columns of $A$ are $[n] \setminus \{\alpha_{k+1},\dots,\alpha_n\}$, we have

\[
\sum_{r_k} \phi(r_1,\dots,r_{k} |\boldsymbol{\alpha}) = \EX_{\alpha_{k}|\alpha_{k+1},\dots,\alpha_n} \{\phi(r_1,\dots,r_{k-1}|\boldsymbol{\alpha})\},\quad  k = 2,\dots, n-1,
\]

where the expectation is taken over $\alpha_{k}$ uniformly distributed on $[n] \setminus \{\alpha_{k+1},\dots,\alpha_n\}$ with  $(\alpha_{k+1},\dots,\alpha_n)$ and $\mathbf{r}$ fixed. This means $d_k$ is the conditional expectation of $e_{k-1}$.

We now rewrite $\phi(\mathbf{r}|\boldsymbol{\alpha})$ as $e_n \prod_{k=2}^{n} e_{k-1}/d_{k}$ since $d_1 = 1$ and start by considering the expectations of the terms  $e_{k-1}/d_{k}, k=2,\dots,n$ with respect to $\alpha_2$ given $\alpha_3,\dots,\alpha_n$.  With the exception of $e_1$ these terms are specified by subsets of $\{\alpha_3,\dots,\alpha_n\}$, so they remain fixed under the conditioning. Since the conditional expectation of $e_1$ is $d_2$, the conditional expectation of $e_n \prod_{k=2}^{n} e_{k-1}/d_{k}$ is $e_n \prod_{k=3}^{n} e_{k-1}/d_{k}$, i.e.\ there is one fewer term in the product. Proceeding in this way and successively taking conditional expectations with respect to $\alpha_{k-1}$ uniformly distributed on $[n] \setminus \{\alpha_{k},\dots,\alpha_n\}$ conditional on $(\alpha_{k},\dots,\alpha_n)$, the product telescopes to $e_n (e_{n-1}/d_n) = p(r_1,\dots,r_n)e_{n-1}/d_n$.  Finally taking expectations over $\alpha_n$, this reduces to $p(r_1,\dots,r_n)$, since $d_n$ is the expected value of $e_{n-1}$. Note a random permutation $(\alpha_1,\dots,\alpha_n)$ is generated starting from $\alpha_n$ in reverse order by successively selecting elements for inclusion at random from what remains. By the chain rule of expectation, we have then shown that $\EX_{\boldsymbol{\alpha}} \{\phi(\newtilde{r}|\boldsymbol{\alpha})\} = p(\newtilde{r})$ as claimed.   
\end{proof}

We can now describe our second and faster \BS/ algorithm.  This algorithm is the basis of our main result, Theorem \ref{thm2n}. We start by sampling a random permutation $\boldsymbol{\alpha} = (\alpha_1, \dots, \alpha_n)$ uniformly distributed on $\pi[n]$. As in Algorithm A we consider a chain of conditional pmfs,  
\[\phi(\newtilde{r}|\boldsymbol{\alpha}) = \prod_{k=1}^n e_k/d_k =\phi(r_1|\boldsymbol{\alpha}) \phi(r_2|r_1,\boldsymbol{\alpha}) \phi(r_3|r_1,r_2,\boldsymbol{\alpha}) \dots \phi(r_n | r_1,r_2,\dots,r_{n-1},\boldsymbol{\alpha}),\]
where now $\phi(r_k|r_1,\dots,r_{k-1},\boldsymbol{\alpha}) = e_k/d_k$, with the notation defined prior to Lemma \ref{thm:phi}.

At stage 1, we sample $r_1$ from the pmf $\phi(r_1|\boldsymbol{\alpha})$. Subsequently for stages $k = 2,\dots,n$, we sample $r_k$ from $\phi(r_k|r_1,\dots,r_{k-1}, \boldsymbol{\alpha})$ with $(r_1,\dots,r_{k-1})$ fixed, exploiting the Laplace expansion for permanents. At completion, the array $(r_1,\dots,r_n)$ at stage $n$ will have been  sampled from the pmf $p(\newtilde{r})$. Sorting $(r_1,\dots,r_n)$ in non-decreasing order, the resulting multiset $\newtilde{z}$ is sampled from the \BS/ distribution $q(\newtilde{z})$.

\begin{algorithm}
  \caption{Boson sampler: single sample $\newtilde{z}$ from q($\newtilde{z})$ in $\bigo{n 2^n + \poly(m,n)}$ time
    \label{alg:better}}
  \begin{algorithmic}[1]
    \Require{$m$ and $n$ positive integers;  $A$ first $n$ columns of $m \times m$ Haar random unitary matrix}
    \State $\newtilde{r} \gets \varnothing$ \Comment{\sc Empty array}
    \State $A \gets \textrm{Permute}(A)$ \Comment{Randomly permute columns of $A$}
    \State $w_i \gets |a_{i,1}|^2, i \in [m]$ \Comment{Make indexed weight array $w$}
    \State $x \gets \textrm{Sample}(w)$ \Comment{Sample index $x$ from $w$}
    \State $\newtilde{r} \gets (\newtilde{r}, x)$ \Comment{Append $x$ to $\newtilde{r}$}
    \For{$k \gets 2 \textrm{ to } n$}
    \State $\Bk \gets A^{[k]}_{\newtilde{r}}$
    \State \sc Compute $\{\perm \Bkl, \ell \in [k] $\} \Comment{As Lemma \ref{lemma1}}
    \State $w_i \gets |\sum_{\ell = 1}^k a_{i,\ell} \perm \Bkl|^2,\; i \in [m]$ \Comment{Using Laplace expansion}
    \State $x \gets \textrm{Sample}(w)$ 
    \State $\newtilde{r} \gets (\newtilde{r}, x)$
    \EndFor
    \Let{$\newtilde{z}$}{\sc IncSort($\newtilde{r}$)} \Comment{Sort $\newtilde{r}$ in non-decreasing order}
    \State \Return{$\newtilde{z}$}
  \end{algorithmic}
\end{algorithm}

\begin{proof}[Proof of Theorem \ref{thm2n}]
It follows from the chain rule for conditional probabilities that for a given fixed $\boldsymbol{\alpha}$ Algorithm B samples from $\phi(\newtilde{r}|\boldsymbol{\alpha})$. As $\boldsymbol{\alpha}$ is uniformly distributed on $\pi[n]$ the algorithm therefore samples from the marginal distribution $\EX_{\boldsymbol{\alpha}} \{\phi(\newtilde{r}|\boldsymbol{\alpha})\}$ which by Lemma \ref{thm:phi} is the pmf $p(\newtilde{r})$ in \eqref{def:pr}.  The complexity is established as follows. At stage 1, the pmf to be sampled is 
\[\phi(r_1|\boldsymbol{\alpha}) = |\perm A_{r_1}^{[n] \setminus \{\alpha_2,\dots,\alpha_n\}}|^2 = |\perm A_{r_1}^{\alpha_1}|^2 = |a_{r_1, \alpha_1}|^2,\quad r_1 \in [m].
\]
Sampling takes $\bigo{m}$ operations. At stage 2 for fixed $r_1$, we need to sample $r_2$ from the pmf $\phi(r_2|r_1,\boldsymbol{\alpha})$.  But since $\phi(r_2|r_1,\boldsymbol{\alpha}) = e_2/d_2$ and $d_2$ does not involve $r_2$, we can sample from the pmf that is proportional to $e_2$ considered   simply as a function of $r_2$ with $r_1$ fixed. Since $e_2$ is proportional to $|\perm A^{\alpha_1,\alpha_2}_{r_1,r_2}|^2$ this involves calculating $m$ permanents of  $2 \times 2$ matrices and then sampling; a further $\bigo{m}$ operations.  

At stage $k$ with $r_1,\dots,r_{k-1}$ already determined, we need to sample $r_k$ from the pmf proportional to 
$|\perm A^{\alpha_1,\dots,\alpha_k}_{r_1,\dots,r_k}|^2$ considered simply as a function of $r_k$.   Let $B = A^{\alpha_1,\dots,\alpha_k}_{r_1,\dots,r_{k}}$ for some arbitrary value of $r_k$, i.e.\ $b_{i,j} = a_{r_i,\alpha_j}$ for $i,j = 1,\dots,k$. Using the Laplace expansion and taking squared moduli
\begin{equation}\label{laplace}
\left|\perm A^{\alpha_1,\dots,\alpha_k}_{r_1,\dots,r_k}\right|^2 = \left|\sum_{\ell = 1}^k a_{r_k,\alpha_{\ell}} \perm \Bkl\right|^2,\; r_k \in [m],
\end{equation}
where $\Bkl$ is defined in the statement of Lemma \ref{lemma1}. From Lemma~\ref{lemma1} all the values $\{\perm \Bkl\}$ can be computed in $\bigo{k2^k}$ time. An array of length $m$ is then obtained from \eqref{laplace} by summing $k$ terms for each $r_k \in [m]$.  Taking a single sample from the pmf proportional to the array takes $\bigo{m}$ time. This gives a total time complexity for stage $k$ of $\bigo{k2^k}$ + $\bigo{mk}$.   

The total time for all stages $k=1,\dots,n$ is therefore $\bigo{n 2^n} + \bigo{m n^2}$.  In practical terms, the exponentially growing term is approximately equivalent to the cost of evaluating two $n \times n$ permanents.  The space usage is dominated by storing a single unnormalised pmf of size $\bigo{m}$ prior to sampling. 
\end{proof}
\paragraph{Remarks:} At the final stage when $r_n$ is selected the value of $\left|\perm A^{\alpha_1,\dots,\alpha_n}_{r_1,\dots,r_n}\right|^2 = 
\left|\perm A_{\newtilde{r}}\right|^2$ will have been computed as in \eqref{laplace}. Hence $q(\newtilde{r})$ the pmf of the sample $\newtilde{r}$  is available at no extra computational cost. Finally note that when only a single value is to be sampled from  $q(\newtilde{r})$ and since the columns of a Haar random unitary matrix are already in a random order, the random permutation $(\alpha_1,\dots,\alpha_n)$ in Algorithm B can be taken to be $(1,2,\dots,n)$ 

\section{Acknowledgements}
RC was funded by EPSRC Fellowship EP/J019283/1. Thank you to Ashley Montanaro, Alex Neville, Anthony Laing and members of the Bristol QET Labs for introducing us to the \BS/ problem and for a number of detailed and helpful conversations. 

\section*{Appendix}
\paragraph{Constant factor overheads in Lemma \ref{lemma1}:}

To compare the constant factor overheads in Lemma \ref{lemma1}, we start by looking at $\perm B$. From Glynn's formula we have
\[
\perm  B = \frac{1}{2^{k-1}}\sum_{\delta}  \left( \prod_{i = 1}^{k} \delta_i \right) \prod_{j \in [k]}  v^*_j(\delta), 
\]
where $\delta  \in \{-1,1\}^{k}$ with $\delta_1 =1$ and  $ v^*_j(\delta) = \sum_{i=1}^{k} \delta_i b_{ij}$. As we work through the values of $\delta$ in Gray code order, updating $v^*_j(\delta), j \in [k]$ requires $k$ additions. Evaluating the product requires $k-1$ multiplications and there is one further addition as the sum is accumulated.  There are $2^{k-1}$ steps making a total of $2 k 2^{k-1} = k 2^k$ operations.  

For the calculation of 
$\{\perm  \Bkl, \ell \in [k]\}$ in Lemma \ref{lemma1}, at each stage there are $k$ additions for updating $\{v_j(\delta), j \in [k]\}$, then $k-2$ multiplications for calculating each of the sets $\{f_\ell\}$,  $\{b_\ell\}$, and $\{f_\ell b_\ell\}$.  Having thereby obtained all the partial products $\prod_{j \in [k]}  v_j(\delta)/v_{\ell}(\delta)$, a further $k$ additions are required as the $k$ partial sums are accumulated. So each of the $2^{k-2}$ steps requires $5 k - 6$ operations. Consequently the operation count is approximately 25\% larger than that of $\perm B$. In practice addition and multiplication with complex arithmetic will have different costs in terms of floating point operations.

\paragraph{Collision probability:}

Suppose that $\newtilde{z}$ is sampled from $q(\newtilde{z})$ in \eqref{eq:Bz_s} and that $\newtilde{s}$ is the associated array of counts.  The probability of duplicate elements in $\newtilde{z}$ is the probability that one of the elements of $\newtilde{s}$ is larger than $1$. In general this probability will depend on the underlying Haar random unitary $A$. 

 From Lemma \ref{th:subsequence} we have  
\begin{align}\label{BSrepeat}
\mathbb{E}(s_i) &= \sum_{k=1}^n |a_{i,k}|^2 \nonumber \\ 
\mathbb{E}(s_i(s_i-1)) &= \sum_{c \in \mathcal{C}_2} \left|\text{Perm }A^c_{i,i}\right|^2 = 
4 \sum_{k<\ell} |a_{i,k} a_{i,\ell}|^2
\end{align}

Let us denote the $A$-specific duplication probability, i.e.\@   the chance of seeing duplicated values in $\newtilde{z}$, by $P_D$.  We have from \eqref{BSrepeat}

\[P_D \leqslant \sum_{i=1}^m \mathbb{P}(s_i \geqslant 2) \leqslant
\frac12 \sum_{i=1}^m \mathbb{E}\,s_i(s_i - 1) = 2 \sum_{i=1}^m
\sum_{k<\ell} |a_{i,k} a_{i,\ell}|^2.
\]

Note that the expected value of the upper bound is $n(n-1)/(m+1) \sim n^2/m$ when averaged over $A$ from the Haar random unitary class. See \citet[Theorem C.4]{AA:2011} and \citet{arkhipov2012bosonic} for comparison.

\bibliographystyle{apalike}
\bibliography{short}

\begin{thebibliography}{}

\bibitem[Aaronson and Arkhipov, 2011]{AA:2011}
Aaronson, S. and Arkhipov, A. (2011).
\newblock The computational complexity of linear optics.
\newblock In {\em STOC '11: Proc. 43\textsuperscript{rd} Annual ACM Symp.
  Theory of Computing}, pages 333--342.

\bibitem[Aaronson and Arkhipov, 2014]{AA:2014}
Aaronson, S. and Arkhipov, A. (2014).
\newblock {Bosonsampling} is far from uniform.
\newblock {\em Quantum Information \& Computation}, 14(15-16):1383--1423.

\bibitem[Arkhipov and Kuperberg, 2012]{arkhipov2012bosonic}
Arkhipov, A. and Kuperberg, G. (2012).
\newblock The bosonic birthday paradox.
\newblock {\em Geometry \& Topology Monographs}, 18:1--7.

\bibitem[Bentivegna et~al., 2015]{bentivegna2015experimental}
Bentivegna, M., Spagnolo, N., Vitelli, C., Flamini, F., Viggianiello, N.,
  Latmiral, L., Mataloni, P., Brod, D.~J., Galv{\~a}o, E.~F., Crespi, A.,
  et~al. (2015).
\newblock Experimental scattershot boson sampling.
\newblock {\em Science advances}, 1(3):e1400255.

\bibitem[Bj{\"o}rklund, 2016]{bjorklund2016below}
Bj{\"o}rklund, A. (2016).
\newblock Below all subsets for some permutational counting problems.
\newblock In {\em 15th Scandinavian Symposium and Workshops on Algorithm Theory
  (SWAT 2016)}, pages 1--17.

\bibitem[Boixo et~al., 2016]{BISBDJMH:2016}
Boixo, S., Isakov, S.~V., Smelyanskiy, V.~N., Babbush, R., Ding, N., Jiang, Z.,
  Bremner, M.~J., Martinis, J.~M., and Neven, H. (2016).
\newblock Characterizing quantum supremacy in near-term devices.
\newblock {\em \href{https://arxiv.org/abs/1608.00263}{arXiv:1608.00263}}.

\bibitem[Broome et~al., 2013]{broome2013photonic}
Broome, M.~A., Fedrizzi, A., Rahimi-Keshari, S., Dove, J., Aaronson, S., Ralph,
  T.~C., and White, A.~G. (2013).
\newblock Photonic boson sampling in a tunable circuit.
\newblock {\em Science}, 339(6121):794--798.

\bibitem[Clifford and Clifford, 2017]{Boson:Rpackage}
Clifford, P. and Clifford, R. (2017).
\newblock {\em R: {C}lassical {B}oson {S}ampling}.
\newblock \url{https://cran.r-project.org/package=BosonSampling}.

\bibitem[Crespi et~al., 2013]{crespi2013integrated}
Crespi, A., Osellame, R., Ramponi, R., Brod, D.~J., Galv{\~a}o, E.~F.,
  Spagnolo, N., Vitelli, C., Maiorino, E., Mataloni, P., and Sciarrino, F.
  (2013).
\newblock Integrated multimode interferometers with arbitrary designs for
  photonic boson sampling.
\newblock {\em Nature Photonics}, 7(7):545--549.

\bibitem[Efraimidis, 2015]{Efraimidis:2015}
Efraimidis, P.~S. (2015).
\newblock Weighted random sampling over data streams.
\newblock In {\em Algorithms, Probability, Networks, and Games}, pages
  183--195. Springer.

\bibitem[Feller, 1968]{Feller:1968}
Feller, W. (1968).
\newblock {\em An Introduction to Probability Theory and its Applications. -
  Vol. 1}.
\newblock Wiley.

\bibitem[Glynn, 2010]{Glynn:2010}
Glynn, D.~G. (2010).
\newblock The permanent of a square matrix.
\newblock {\em European Journal of Combinatorics}, 31(7):1887--1891.

\bibitem[Green et~al., 2015]{GLPR:2015}
Green, P.~J., {\L}atuszy{\'{n}}ski, K., Pereyra, M., and Robert, C.~P. (2015).
\newblock Bayesian computation: a summary of the current state, and samples
  backwards and forwards.
\newblock {\em Statistics and Computing}, 25(4):835--862.

\bibitem[Grover, 1996]{Grover:1996}
Grover, L.~K. (1996).
\newblock A fast quantum mechanical algorithm for database search.
\newblock In {\em STOC '96: Proc. 28\textsuperscript{th} Annual ACM Symp.
  Theory of Computing}, pages 212--219.

\bibitem[Harrow and Montanaro, 2017]{HM:2017}
Harrow, A.~W. and Montanaro, A. (2017).
\newblock Quantum computational supremacy.
\newblock {\em Nature}, 549(7671):203--209.

\bibitem[Hastings, 1970]{hastings1970monte}
Hastings, W.~K. (1970).
\newblock Monte {C}arlo sampling methods using {M}arkov chains and their
  applications.
\newblock {\em Biometrika}, 57(1):97--109.

\bibitem[Kronmal and Peterson~Jr, 1979]{KP:1979}
Kronmal, R.~A. and Peterson~Jr, A.~V. (1979).
\newblock On the alias method for generating random variables from a discrete
  distribution.
\newblock {\em The American Statistician}, 33(4):214--218.

\bibitem[Latmiral et~al., 2016]{LSS:2016}
Latmiral, L., Spagnolo, N., and Sciarrino, F. (2016).
\newblock Towards quantum supremacy with lossy scattershot boson sampling.
\newblock {\em New Journal of Physics}, 18(11).

\bibitem[Liu, 2008]{liu2008monte}
Liu, J.~S. (2008).
\newblock {\em Monte Carlo Strategies in Scientific Computing}.
\newblock Springer Science \& Business Media.

\bibitem[Lund et~al., 2017]{LBR:2017}
Lund, A.~P., Bremner, M.~J., and Ralph, T.~C. (2017).
\newblock Quantum sampling problems, bosonsampling and quantum supremacy.
\newblock {\em NPJ Quantum Information}, 3(1):15.

\bibitem[Marcus and Minc, 1965]{marcus1965permanents}
Marcus, M. and Minc, H. (1965).
\newblock Permanents.
\newblock {\em The American Mathematical Monthly}, 72(6):577--591.

\bibitem[Neville et~al., 2017]{NSCJML:2017}
Neville, A., Sparrow, C., Clifford, R., Johnston, E., Birchall, P.~M.,
  Montanaro, A., and Laing, A. (2017).
\newblock Classical boson sampling algorithms with superior performance to
  near-term experiments.
\newblock {\em Nature Physics}.
\newblock Advance online publication. \doi{10.1038/nphys4270}.

\bibitem[Nijenhuis and Wilf, 1978]{NW:1978}
Nijenhuis, A. and Wilf, H.~S. (1978).
\newblock {\em Combinatorial Algorithms: for Computers and Calculators}.
\newblock Academic press.

\bibitem[Propp and Wilson, 1996]{propp-wilson:exact-sampling}
Propp, J.~G. and Wilson, D.~B. (1996).
\newblock Exact sampling with coupled {M}arkov chains and applications to
  statistical mechanics.
\newblock {\em Random Structures and Algorithms}, 9(1\&2):223--252.

\bibitem[Ryser, 1963]{Ryser:1963}
Ryser, H.~J. (1963).
\newblock {\em Combinatorial Mathematics}, volume~14 of {\em Carus Mathematical
  Monographs}.
\newblock {M}athematical {A}ssociation of {A}merica.

\bibitem[Shchesnovich, 2013]{shchesnovich2013asymptotic}
Shchesnovich, V. (2013).
\newblock Asymptotic evaluation of bosonic probability amplitudes in linear
  unitary networks in the case of large number of bosons.
\newblock {\em International Journal of Quantum Information}, 11(05):1350045.

\bibitem[Shor, 1997]{Shor:1997}
Shor, P.~W. (1997).
\newblock Polynomial-time algorithms for prime factorization and discrete
  logarithms on a quantum computer.
\newblock {\em SIAM Journal on Computing}, 26(5):1484--1509.

\bibitem[Spagnolo et~al., 2014]{spagnolo2014experimental}
Spagnolo, N., Vitelli, C., Bentivegna, M., Brod, D.~J., Crespi, A., Flamini,
  F., Giacomini, S., Milani, G., Ramponi, R., Mataloni, P., et~al. (2014).
\newblock Experimental validation of photonic boson sampling.
\newblock {\em Nature Photonics}, 8(8):615--620.

\bibitem[Spring et~al., 2013]{spring2013boson}
Spring, J.~B., Metcalf, B.~J., Humphreys, P.~C., Kolthammer, W.~S., Jin, X.-M.,
  Barbieri, M., Datta, A., Thomas-Peter, N., Langford, N.~K., Kundys, D.,
  et~al. (2013).
\newblock Boson sampling on a photonic chip.
\newblock {\em Science}, 339(6121):798--801.

\bibitem[Terhal and DiVincenzo, 2004]{TD:2004}
Terhal, B.~M. and DiVincenzo, D.~P. (2004).
\newblock Adaptive quantum computation, constant depth quantum circuits and
  {A}rthur-{M}erlin games.
\newblock {\em Quantum Information \& Computation}, 4(2):134--145.

\bibitem[Tichy, 2011]{Tichy:2011}
Tichy, M.~C. (2011).
\newblock {\em Entanglement and Interference of Identical Particles}.
\newblock PhD thesis, Freiburg University.

\bibitem[Tichy et~al., 2014]{tichy2014stringent}
Tichy, M.~C., Mayer, K., Buchleitner, A., and M{\o}lmer, K. (2014).
\newblock Stringent and efficient assessment of boson-sampling devices.
\newblock {\em Physical Review Letters}, 113(2):020502.

\bibitem[Tillmann et~al., 2013]{tillmann2013experimental}
Tillmann, M., Daki{\'c}, B., Heilmann, R., Nolte, S., Szameit, A., and Walther,
  P. (2013).
\newblock Experimental boson sampling.
\newblock {\em Nature Photonics}, 7(7):540--544.

\bibitem[Valiant, 1979]{Valiant:1979}
Valiant, L. (1979).
\newblock The complexity of computing the permanent.
\newblock {\em Theoretical Computer Science}, 8(2):189 -- 201.

\bibitem[Walker, 1974]{Walker:1974}
Walker, A.~J. (1974).
\newblock New fast method for generating discrete random numbers with arbitrary
  frequency distributions.
\newblock {\em Electronics Letters}, 10(8):127--128.

\bibitem[Walschaers et~al., 2016]{walschaers2016statistical}
Walschaers, M., Kuipers, J., Urbina, J.-D., Mayer, K., Tichy, M.~C., Richter,
  K., and Buchleitner, A. (2016).
\newblock Statistical benchmark for bosonsampling.
\newblock {\em New Journal of Physics}, 18(3):032001.

\bibitem[Wang et~al., 2016]{Wang2016}
Wang, H., He, Y., Li, Y.-H., Su, Z.-E., Li, B., Huang, H.-L., Ding, X., Chen,
  M.-C., Liu, C., Qin, J., Li, J.-P., He, Y.-M., Schneider, C., Kamp, M., Peng,
  C.-Z., Hoefling, S., Lu, C.-Y., and Pan, J.-W. (2016).
\newblock Multi-photon boson-sampling machines beating early classical
  computers.
\newblock {\em \href{https://arxiv.org/abs/1612.06956}{arXiv:1612.06956}}.

\bibitem[Wang and Duan, 2016]{wang2016certification}
Wang, S.-T. and Duan, L. (2016).
\newblock Certification of boson sampling devices with coarse-grained
  measurements.
\newblock {\em Bulletin of the American Physical Society}.

\bibitem[Wu et~al., 2016]{wu2016computing}
Wu, J., Liu, Y., Zhang, B., Jin, X., Wang, Y., Wang, H., and Yang, X. (2016).
\newblock {Computing permanents for boson sampling on Tianhe-2 supercomputer}.
\newblock {\em \href{https://arxiv.org/abs/1606.05836}{arXiv:1606.05836}}.

\end{thebibliography}
\end{document}